\documentclass[10pt, twoside,twocolumn]{IEEEtran}

\usepackage{hyperref,amsmath,amsthm,amssymb,cite}
\usepackage{graphicx,color,lipsum,dsfont,tikz,pgfplots}
\usepackage[colorinlistoftodos,draft]{todonotes} % notes showed
\usepackage{algorithm}

{}
\newtheorem{lemma}{\bf \noindent Lemma}{}

% Select what to do with command \comment:
% \newcommand{\comment}[1]{} %comment not showed
 %comment showed
%\newcommand{\green}[1] {\par {\color{blue} #1}} %comment showed
 %comment showed
 %comment showed
 %comment showed
%\newcommand{\bcomment}[1] {\par {\bfseries \color{green} #1 \par}} %comment showed

\IEEEoverridecommandlockouts

\title{Optimal Energy Beamforming under Per-Antenna Power Constraint }

\author{
%	\normalsize
	\IEEEauthorblockN{Zahra~Rezaei,
		Ehsan~Yazdian,  Foroogh~S.~Tabataba,  and Saeed~Gazor}\thanks{Z.~Rezaei, E.~Yazdian and F.~S.~Tabataba are with the Department of Electrical and Computer Engineering, Isfahan University of Technology, Isfahan 84156-83111, Iran. Email: zahra.rezaei@ec.iut.ac.ir, yazdian, fstabataba@cc.iut.ac.ir}
\thanks{Saeed Gazor is with the ECE Department, Queens' University, Kingston, Canada. Email: gazor@queensu.ca}	
%   	\IEEEauthorblockA{
%   		\IEEEauthorrefmark{1}Electrical Engineering Department, Isfahan University of Technology (IUT), Isfahan, Iran,
%   		\\Email: zahra.rezaei@ec.iut.ac.ir, yazdian@cc.iut.ac.ir, fstabataba@cc.iut.ac.ir,
%   		\IEEEauthorrefmark{2}Electrical and Computer Engineering Department, Queen’s University, Kingston, ON, Canada,
%   		\\Email: gazor@queensu.ca}
}

%\author[1,*]{Zahra~Rezaei}
%\affil{Electrical Engineering Department, Isfahan University of Technology (IUT), Isfahan, Iran}
%\author{Ehsan~Yazdian}
%\affil{Electrical Engineering Department, Isfahan University of Technology (IUT), Isfahan, Iran}
%\author{Foroogh~S.~Tabataba}
%\affil{Electrical Engineering Department, Isfahan University of Technology (IUT), Isfahan, Iran}
%\affil[*]{Masoudarash1990@gmail.com}
%
%\author{Zahra Rezaei, Ehsan Yazdian and Foroogh S. Tabataba\\
%\address{Emails: \texttt{zahra.rezaei@ec.iut.ac.ir, yazdian@cc.iut.ac.ir, fstabataba@cc.iut.ac.ir}}\\

%\thanks{Z. Rezaei is with the Department of Electrical \& Computer Engineering,
%Isfahan University of Technology (email: \textsl{zahra.rezaei@ec.iut.ac.ir}).}
%\thanks{E. Yazdian is with the Department of Electrical \& Computer Engineering,
%Isfahan University of Technology (email: \textsl{yazdian@cc.iut.ac.ir}).}
%\thanks{F.~S.~Tabataba is with the Department of Electrical \& Computer Engineering,
%Isfahan University of Technology (email: \textsl{fstabataba@cc.iut.ac.ir}).}}

%*********************************************************************
\begin{document}

\maketitle

%************** Abstract and Keywords *********************
\begin{abstract}
Energy beamforming (EB) is a key technique to enhance the efficiency of wireless power transfer (WPT). In this paper, we study the optimal EB under per-antenna power constraint (PAC) which is more practical than the conventional sum-power constraint (SPC). We consider  a multi antenna energy transmitter (ET) with PAC that broadcasts wireless energy to multiple randomly placed energy receivers (ER)s within its cell area. We consider sum energy maximization problem with PAC and provide the optimal solution structure for the general case. This optimal structure implies that sending one energy beam is optimal under PAC which means that the rank of transmit covariance matrix is one similar to SPC. We also derive closed-form solutions for two special cases and propose two sub-optimal solutions for general case, which performs very close to optimal beamforming.
\end{abstract}

\begin{IEEEkeywords}
Wireless power transfer; Energy beamforming; Per-antenna power constraint; Semi-definite
programming.
\end{IEEEkeywords}
%************** Abstract and Keywords *********************

%************** Introduction *********************
\section{Introduction}\label{sec1}
\IEEEPARstart{R}{ecent} advances in microwave wireless power transfer (WPT) technology enables us to build wireless
powered communication networks (WPCNs), where wireless devices such as smart phones, RF identification (RFID) tags, wearable electronic devices and etc., are powered over the air by wireless power transmitters \cite{survay,oppo}. In WPCNs, the wireless devices harvest the energy from received signal.
 %instead of using conventional batteries.
 WPCN is more user-friendly and cost-effective by reduceing the need for manual battery replacement/recharging and connection cables. In addition, WPCNs %effectively enhance communication performance, compared to conventional battery-powered communication systems, due to
 enable on-demand energy delivery and reduce the chance of interruption during operation.
%Differently from ambient energy harvesting techniques in which the
%wireless devices opportunistically harvest the energy from environment that is not dedicated to power the devices such as solar power and ambient radio frequency (RF) transmission, WPT has the major advantage of
%being fully controlled and does not rely on an external random phenomenon \cite{wpcn}.
For this reason, RF-enabled WPT has attracted a lot of attention in wireless researches, due to its controllability and reliability (see, e.g., \cite{coll, onebit, multi, retro, disone, feng, chu,enhancing, ramezani}).
%Currently, RF power in milliwatt (mW) level can be effectively transferred to a distance of more than 10 meters \cite{placement}.

The main drawback of WPT is the decay in electromagnetic wave as the transmission distance increases which yields low received energy. Moreover the RF energy is attenuated due to channel fading caused by reflection, scattering, and refraction in propagation environment. The received signal may be very weak, making  difficult to harvest energy from it or to detect. The problem is more challenging  using WPT, since a more significant  signal strength level is required by an energy receiver (ER). For
instance, a typical information receiver can operate even at $-60$ dBm received signal power,
whereas an ER needs up to $-10$ dBm signal power \cite{mimo}. Hence, it is necessary to design more
efficient WPT mechanisms  \cite{enhancing}. 

To efficiently solve this problem, multi-antenna techniques, which have been successfully employed in wireless communication systems to improve the information transmission rate and reliability over wireless channels, have been also proposed for WPT \cite{onebit}. Employing multi-antenna systems can increase the efficiency of the power transfer gain without increasing transmit power and bandwidth. Specifically, deploying multiple antennas at the energy transmitter (ET) enables us to use advanced energy beamforming (EB) techniques to focus the transmitted power toward the desired ERs and thereby to maximize
the received signal amplitude. By optimizing the transmit waveforms, ET could control the collective behaviour of the radiated waveforms causing them to combine coherently at a desired ER \cite{oppo}. The analysis of directional WPT under different scenarios has been studied in \cite{mimo, onebit, chu, coll, multi, disone, retro}.
For the point-to-point MIMO WPT system, it has been shown in \cite{mimo} that the EB is the optimal
solution to maximize the harvested energy by transmitting an energy beam at the ET. The weighted sum-energy maximization problem is formulated in \cite{onebit} for a multiuser MIMO WPT system, which results in an optimal solution similar to \cite{mimo}. 

According to the current hardware technology in multi-antenna systems, a more practical constraint is to have per-antenna power constraint (PAC). This is because that each individual antenna has its own RF power amplifier which has individual power limit \cite{massive, DAS}. Another appealing scenario for the PAC is in the distributed MIMO systems, where  the transmitted antennas are at different physical locations attempting to cooperate in the design of their transmit signals. In these cases, each node has its own  power constraint instead of sharing a total  power budget among different nodes \cite{miso, emil, zhang2017}. We point out that the energy signals at distributed  antennas may be designed offline and stored for real-time transmission, which is in contrast to information signals which are independents stochastic sequences on different distributed ET antennas \cite{coll,zhang2017}.
%Understanding the amount of harvesting power and optimal signalling schemes (for example the number of energy signals should be stored at each transmit antenna) for EB under the PAC has significant importance in real situations.
To the best of our knowledge, the existing literature on EB (e.g., \cite{onebit,multi,mimo}) only assume a  sum-power constraint (SPC). The authors in \cite{coll} have considered a distributed MIMO system for collaborative WPT and derived a closed-form solution for maximizing the sum of powers subject to PAC with only two transmit and receive antennas. They have  shown that the optimal solution is one single energy beam in this situation.

In this paper, we consider sum power maximization problem under PAC and provide the optimal solution structure. We also derive closed-form solutions for two special cases 1) two transmit antennas and arbitrary number of ERs with single antenna and 2) arbitrary number of antennas at ET and one ER. In addition, we propose two sub-optimal solutions for general case. Simulation results show that these sub-optimal solutions are matched closely to optimal numerical results. We show that in the case of PAC, similar to SPC, transmitting only one single energy beam at the ET is optimal for maximizing the sum power in all ERs.
From a practical viewpoint, it is desirable specially in distributed systems, since it means that in each time interval, only one  energy signal should be stored at each distributed ET antenna \cite{coll}. Furthermore, since the optimal covariance matrix is rank-one, our approach can be employed in other applications such as beamforming to maximize the sum rate in a common data multicast system\cite{multicast,multicast2}.

The rest of this paper is organized as follows. Section II presents the WPTN model. Section III studies the sum-energy maximization problem with PAC. Analytical solution of this problem is also provided.  Numerical results are provided in Section IV and the conclusions are drawn in Section \ref{sec:con}.

%*********************************************************************
\section{System Model}\label{sec:model}

\begin{figure}
\centering

\begin{tikzpicture}[even odd rule,rounded corners=2pt,x=12pt,y=12pt,scale=.9]
\draw[thick,fill=red!10] (2,0) rectangle (4,10) node[midway,rotate=90]{Energy Transmitter};
\draw[->,thick] (4,1)--++(1.5,0)--+(0,1);
\draw[thick,fill=gray!15] (5.5,2)--++(.75,0.5)--++(-1.5,0)--++(.75,-.5)--++(0,-.1);
\draw (5.5,4.5) node  {\huge$\vdots$};
\draw[->,thick] (4,6)--++(1.5,0)--+(0,1);
\draw[thick,fill=gray!15] (5.5,7)--++(.75,0.5)--++(-1.5,0)--++(.75,-.5)--++(0,-.1);

\draw[->,thick] (4,9)--++(1.5,0)--+(0,1);
\draw[thick,fill=gray!15] (5.5,10)--++(.75,0.5)--++(-1.5,0)--++(.75,-.5)--++(0,-.1);

\draw [-,fill=blue!30,blue!20] (6.5,4.5) to [bend left=20] ++(3,.5) to [bend left=40]
++(0,-1) to [bend left=20] ++(-3,.5);
\draw [-,fill=blue!30,blue!20] (6.5,6) to [bend left=20] ++(4.5,1.5) to [bend left=40]
++(0,-1) to [bend left=20] ++(-4.5,-.5);

\draw [-,fill=blue!30,blue!20] (7,2.5) to [bend left=20] ++(2,0) to [bend left=40]
++(0,-1) to [bend left=20] ++(-2,1);

\draw[thick,fill=green!10] (10,8) rectangle ++(2,1.5) node[midway]{\footnotesize $\mathrm{ER}_1$}
 ++(0,0) node[right]{\small \shortstack[l]{ Energy \\ Receiver 1}};
\draw[<-,thick,fill=gray!15] (11,9.5)--++(0,.5)--++(.75,0.5)--++(-1.5,0)--++(.75,-.5)--++(0,-.1);

\draw[thick,fill=green!10] (14,5) rectangle ++(2,1.5) node[midway]{\footnotesize $\mathrm{ER}_2$};
\draw[<-,thick,fill=gray!15] (15,6.5)--++(0,.5)--++(.75,0.5)--++(-1.5,0)--++(.75,-.5)--++(0,-.1);

\draw[thick,fill=green!10] (1
6,0) rectangle ++(2,1.5) node[midway]{\footnotesize $\mathrm{ER}_K$};
\draw[<-,thick,fill=gray!15] (17,1.5)--++(0,.5)--++(.75,0.5)--++(-1.5,0)--++(.75,-.5)--++(0,-.1);

\draw[thick,fill=green!10] (10,1) rectangle ++(2,1.5) node[midway]{\footnotesize $\mathrm{ER}_3$};
\draw[<-,thick,fill=gray!15] (11,2.5)--++(0,.5)--++(.75,0.5)--++(-1.5,0)--++(.75,-.5)--++(0,-.1);

\draw (14,2) node {\huge$\ddots$};
\end{tikzpicture}

\caption{Graphical illustration of a multiuser wireless power transfer (WPT) system using a MISO broadcast system.}
\label{fig0}
\centering
\end{figure}
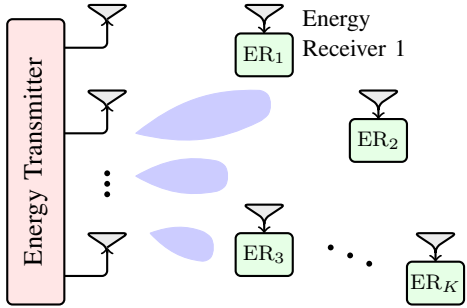
We consider a multiuser MISO broadcast system for WPT as shown in Fig. \ref{fig0}, where one ET with $ N \geq1 $ transmit antennas transfers wireless energy to $K$ single-antenna ERs. %The ERs are randomly distributed within the coverage area.
We assume that the channel response vector from the ET to the $k^{th}$ ER $\mathbf{h}_{k} \in \mathds{C}^{1 \times N}$ has a circularly symmetrical  complex Gaussian distribution. Let $H = [\mathbf{h}_1, \mathbf{h}_2, \cdots, \mathbf{h}_{K}]^T  \in \mathds{C} ^{K \times N} $ denotes the concatenation of all channels, where the $k$-th row is the channel coefficients of the $k^{th}$ ER to ET. Channels are assumed to be quasi-static flat fading, where the channel coefficients of the ET to ERs remain constant within each transmission block and may change from one block to another. In addition, the channel model includes both small scale fading and distance-dependent pathloss components. Duration of each transmission block is $T$ symbols, which is assumed to be sufficiently long for typical low-mobility WPT applications. In addition, it is assumed that perfect channel state information (CSI) is available at ET for designing linear EB.

The received signal at $k^{th}$ ER is
\begin{equation}
\mathbf{y}_k =\mathbf{h}_k \mathbf{x} + n_k , \hspace{0.25 cm} k = 1, \cdots , K,
\end{equation}
where $\mathbf{x}$ is the ET signal and $n_k$ is the additive noise at $k^{th}$ ER. We assume that the harvested power denoted by $E_k$ at the $k^{th}$ ER is fraction of the received RF power.
 Thus, the harvested energy at the $k^{th}$ ER can be expressed as
\begin{equation}
E_k = \rho_k T \mathbb{E} \lbrace | \mathbf{y_k} |^2 \rbrace \approx \rho_k T \mathbb{E} \lbrace | \mathbf{h_k} \mathbf{x} |^2 \rbrace,
\label{2}
\end{equation}
where the constants $\rho_k \in [0 , 1], k = 1 \cdots , K$ represent the energy harvesting efficiency of $k^{th}$ ER. Note that the ER does not need to convert the received RF signal  to the baseband in order to harvest the energy. For convenience, we assume that $\rho_1 = \cdots = \rho_{K} = \rho$ in  this paper. Since $T$ and $\rho$ have fixed values, they do not affect our results, thus, we assume $\rho=1, T=1$sec in the rest of this paper, thus the average energy and power are identical for $T=1$sec. In addition in (\ref{2}), we have ignored the background noise power $\sigma_k^2=\mathbb{E}\{|n_k|^2\}$ since $\sigma_k^2$ is practically insignificant compared to the average received signal power from the viewpoint of WPT \cite{mimo}.
In this case, by defining $Q = \mathbb{E} \lbrace \mathbf{x} \mathbf{x^H} \rbrace$ as the transmit covariance matrix of the energy signals from ET, we can write $E_k\propto  \mathbf{h_k} Q \mathbf{h_k}^H$. Thus the sum of energies harvested  by all ERs are proportional to $\sum_{k=1}^{K} \mathbf{h_k} Q \mathbf{h_k}^H=\mbox{tr}(HQH^H)$ where $\mbox{tr}(.)$ denotes the trace of a matrix.

\section{WPT design under Per-Antenna Power Constraint}\label{sec:prob_sum}
Under the SPC, the total transmit power from ET is limited by $p_t$ which can be allocated arbitrary among the transmit antennas.  This constraint is shown in the matrix form as $\mbox{tr} (Q) \leq p_t$.
In this section, the design objective for $Q$ is to maximize the sum-energy received by all ERs under the per-antenna power constraint.
The diagonal values of $Q$ represents the power transmitted by each antenna which should be bounded by $p_{i}$, i.e., $q_{i,i} \leq p_{i}$ which is clearly a more stringent constraint than $\mbox{tr} (Q) \leq p_t$ \cite{miso}.
 Thus,  our design problem can be formulated as
\begin{align}
\begin{array}{rrclcl}
  (\textrm{P1}): \max_{Q} & \multicolumn{3}{l}{ \mbox{tr} (H Q H^H)} \\
%\\
\textrm{s.t.} & q_{i,i} \leq p_i & \forall i \in N \\
%\\
&\multicolumn{3}{l}{ Q \succeq 0,\ \ \ Q=Q^H. }
\end{array}
\label{pr4-2}
\end{align}

 Let us express the positive semi-definite matrix $Q$ by its eigenvalue decomposition as $Q = G \Gamma G^H=WW^H$, where $G G^H = I$, $G \in \mathds{C}^{N \times r}$ is the precoding matrix, $\Gamma = \mbox{diag} (\gamma_1, \cdots,\gamma_r)$, with $\gamma_1 \geq \gamma_2 \geq ,\cdots, \geq \gamma_r \geq 0$ are the positive eigenvalues and $W=G\Gamma^{\frac 12}$ \cite{coll}. Using the optimal beamforming matrix, the ET can generate and transmit the  vector
$\mathbf{x} = WS=\sum_{m=1}^{r} \mathbf{w}_m s_m \in \mathds{C}^{N \times 1},$
 where the beam $\mathbf{w}_m \in \mathds{C}^{N \times 1}$    is $m$th column of $W$. The elements of $S=[s_1,\cdots,s_r]$ shall be uncorrelated and should have unit variance $E\lbrace|s_m|^2 \rbrace = 1, \forall m$. These elements  can be generated as independent sequences of arbitrary distribution with zero mean and unit variance \cite{onebit}.
 %Since in general $r \geq 1$, the number of $s_m$ signals and the transmission beamforming vectors may be more than one depending on the metric of interest.
In \cite{onebit,mimo}, it is shown that ignoring the power constraints on each antenna (PAC), the maximum achievable sum energies harvested by  all ERs is $\sup E_t= p_t \xi_1$ which is achieved by $Q=p_t\mathbf{v}_{1} \mathbf{v}_{1}^H$ where $\xi_1$ and $\mathbf{v}_1$ are the largest eigenvalue of $H^HH$ and its corresponding eigenvector, respectively.

We can rewrite the set of constraints $q_{i,i} \leq p_i$ as $\mathbf{e}_i^T Q \mathbf{e}_i\leq p_i$, since $q_{i,i}=\mathbf{e}_i^T Q \mathbf{e}_i$ where $e_i = [0 \cdots 1 \cdots 0]^T$ with 1  as its $i^{th}$ element and $0$ elsewhere. Thus, these PACs and the cost function are all linear in $Q$ \cite{miso}. However the convex constraint $Q\succeq 0$ makes the problem challenging. Therefore, since the problem (P1) is convex, we can employ the existing semi-definite programming (SDP) to solve it. However, the existing iterative methods are either computationally expensive and involve some additive error because of the large number of unknowns. For the special case of $N = K = 2$,  a closed-form analytic solution is found for this problem  in \cite{coll}. However, to the best of our knowledge there is no closed-form solution available for the general case which is the scope of this paper.

\section{Peroperties Of The Optimal Solution}\label{sec:optimal}
\begin{lemma}
The $i$th diagonal value of $Q$ for the solution of problem (P1) must be equal to $q_{i,i} = p_i, i=1, ..., N$, otherwise we can increase the objective function by only increasing $q_{i,i}$ to $ p_i$ while all constraints remain satisfied  \cite{miso}.
\end{lemma}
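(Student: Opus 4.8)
The plan is to argue by contradiction using a perturbation that touches only the diagonal of $Q$. Suppose $Q^*$ is optimal for (P1) but some per-antenna constraint is slack, say $q_{j,j}^* < p_j$ for some index $j$. I would set $\delta = p_j - q_{j,j}^* > 0$ and define the perturbed matrix $\tilde Q = Q^* + \delta\, \mathbf{e}_j \mathbf{e}_j^T$, which differs from $Q^*$ only in its $(j,j)$ entry, raising it exactly to $p_j$ while leaving every other entry untouched.

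First I would verify that $\tilde Q$ is feasible for (P1). Since $\mathbf{e}_j \mathbf{e}_j^T$ is Hermitian we have $\tilde Q = \tilde Q^H$, and for positive semi-definiteness note that for any $\mathbf{z} \in \mathds{C}^{N\times 1}$, $\mathbf{z}^H \tilde Q \mathbf{z} = \mathbf{z}^H Q^* \mathbf{z} + \delta |z_j|^2 \geq 0$, so $\tilde Q \succeq 0$. The diagonal entries satisfy $\tilde q_{i,i} = q_{i,i}^* \leq p_i$ for $i \neq j$ and $\tilde q_{j,j} = p_j$, so all the constraints $\tilde q_{i,i} \leq p_i$ remain satisfied.

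Next I would evaluate the change in the objective. Writing $\mathbf{c}_j = H\mathbf{e}_j$ for the $j$th column of $H$ and using linearity of the trace together with $\mbox{tr}(H \mathbf{e}_j \mathbf{e}_j^T H^H) = \mbox{tr}(\mathbf{c}_j \mathbf{c}_j^H) = (H^H H)_{j,j}$, I obtain
\begin{equation}
\mbox{tr}(H \tilde Q H^H) - \mbox{tr}(H Q^* H^H) = \delta (H^H H)_{j,j} = \delta \sum_{k=1}^{K} |h_{k,j}|^2 \geq 0 .
\end{equation}
This is the $\delta$-scaled squared norm of the $j$th column of $H$, which is strictly positive whenever at least one ER has a nonzero channel gain from antenna $j$. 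Under the circularly symmetric complex Gaussian channel model assumed in Section \ref{sec:model}, the columns of $H$ are almost surely nonzero, so $\mbox{tr}(H \tilde Q H^H) > \mbox{tr}(H Q^* H^H)$, contradicting the optimality of $Q^*$ and forcing $q_{j,j}^* = p_j$ for every $j$.

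The only delicate point, and the step I expect to be the main (though minor) obstacle, is the degenerate case in which the $j$th column of $H$ vanishes, so that the perturbation leaves the objective unchanged rather than strictly increasing it. In that event antenna $j$ contributes nothing to any ER, and one may still replace $Q^*$ by $\tilde Q$ without decreasing the objective; hence there always exists an optimal solution whose diagonal saturates the constraints, and the statement holds for every optimum once the measure-zero set of degenerate channel realizations is excluded.
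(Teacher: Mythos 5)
Your proposal is correct and is exactly the perturbation argument the paper itself invokes (the lemma's justification is the one-line claim that increasing $q_{i,i}$ to $p_i$ keeps all constraints satisfied while not decreasing the objective); you have simply filled in the routine details of feasibility and the objective increment $\delta\,(H^HH)_{j,j}$, including the correct caveat about a vanishing column of $H$. No substantive difference from the paper's approach.
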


Employing the above lemma, we only need to find the off-diagonal entries of $Q$. The main complexity here is due to the positive semi-definite constraint (i.e., $Q \succeq 0$).
In this section, we consider the properties of positive semi-definite matrices and find closed-form solutions for two special cases. For the general case, we also propose two heuristic close to optimal solutions which are very close to the results from numerical methods.

A matrix $Q$ is positive semi-definite, if and only if all its eigenvalues are non-negative, which means that the smallest eigenvalue of $Q$ shall be non-negative.
An alternative way to verify $Q$ is positive semi-definite is that its principal minors are all positive semi-definite (a principal minor is obtained by removing a subset of its columns and the same subset of corresponding its rows) \cite{miso}. In the following lemma, we first only use $2 \times 2$ principal minors  of $Q$ and find an important property of the optimal solution.
 We then show that the rank of the optimal solution of problem P1 is one, when relaxing the constraints on higher order principal minors. Interestingly, it  turns out that the solution for the relaxed problem   satisfies $Q\succeq 0$, and thus is also the optimal solution for P1.

\begin{lemma}\label{lemma: 1} {For a given $H \in \mathds{C} ^{K \times N} $ with $\alpha_{i,j} = [H^HH]_{i,j} \neq 0$ for all $i,j \in \{1,\cdots, N\}$},
the amplitude of the off-diagonal entries of $Q$ are $| q_{i,j} | = \sqrt{p_i p_j}$ for all $ i,j \in \{1,\cdots,N\}$ for the solution of the relaxed version of  (P1) where all $1\times 1$ and $2 \times 2$ principal minors  of $Q$ are positive semi-definite.\end{lemma}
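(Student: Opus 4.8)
The plan is to exploit the fact that, once the diagonal of $Q$ is pinned to $q_{i,i}=p_i$ by the preceding lemma, the relaxed problem decouples into independent scalar subproblems over the off-diagonal entries. First I would rewrite the objective using the cyclic property of the trace, $\mbox{tr}(HQH^H)=\mbox{tr}(H^HHQ)$, and set $A=H^HH$ so that $\alpha_{i,j}=A_{i,j}$. Collecting the conjugate index pairs $(i,j)$ and $(j,i)$ and using that both $A$ and $Q$ are Hermitian, the cost takes the real-valued form
\begin{equation}
\mbox{tr}(HQH^H)=\sum_{i=1}^{N}\alpha_{i,i}p_i+2\sum_{i<j}\Re\!\left(\overline{\alpha_{i,j}}\,q_{i,j}\right),
\end{equation}
whose first sum is a constant fixed by the diagonal and whose second sum is the only part we can influence.

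Next I would make the decoupling explicit. In the relaxed problem the only constraints retained beyond $Q=Q^H$ are the $1\times1$ and $2\times2$ principal minors. The $1\times1$ minors give merely $p_i\ge 0$, while the $2\times2$ minor indexed by the pair $\{i,j\}$ reads
\begin{equation}
\det\!\begin{pmatrix} p_i & q_{i,j}\\ \overline{q_{i,j}} & p_j\end{pmatrix}=p_i p_j-|q_{i,j}|^2\ge 0,
\end{equation}
that is, $|q_{i,j}|\le\sqrt{p_i p_j}$. Since each such minor bounds a single off-diagonal entry $q_{i,j}$, and each $q_{i,j}$ enters the cost only through the term $2\Re(\overline{\alpha_{i,j}}\,q_{i,j})$, the relaxed maximization separates into one independent problem per unordered pair $\{i,j\}$.

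Finally I would solve each scalar subproblem: maximize $\Re(\overline{\alpha_{i,j}}\,q_{i,j})$ over the disk $|q_{i,j}|\le\sqrt{p_i p_j}$. Writing $q_{i,j}$ and $\alpha_{i,j}$ in polar form, the real part equals $|\alpha_{i,j}|\,|q_{i,j}|\cos(\cdot)$, so the optimum is attained by aligning the phase of $q_{i,j}$ with that of $\alpha_{i,j}$ and driving the magnitude to the boundary. The hypothesis $\alpha_{i,j}\neq 0$ is exactly what makes this functional nonzero, so its maximum over the disk must lie on the boundary circle, forcing $|q_{i,j}|=\sqrt{p_i p_j}$ for every $i\neq j$, which is the claim.

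I expect the main obstacle to be justifying the decoupling cleanly — verifying that, among the constraints kept in the relaxed problem, the magnitude of each off-diagonal entry is governed by its own $2\times2$ minor alone and is otherwise free, so that phases may be chosen pairwise to align with the $\alpha_{i,j}$. The nonvanishing assumption $\alpha_{i,j}\neq 0$ does the real work here, since otherwise an off-diagonal entry would drop out of the objective and the boundary conclusion would fail. I would emphasize that this argument establishes only the amplitudes $|q_{i,j}|=\sqrt{p_i p_j}$; whether the resulting matrix satisfies the full constraint $Q\succeq 0$ (the higher-order minors being relaxed) is a separate matter, to be settled in the subsequent development.
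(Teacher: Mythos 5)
Your proposal is correct and follows essentially the same route as the paper's proof: the $2\times 2$ minor determinant condition gives $|q_{i,j}|\le\sqrt{p_ip_j}$, and since each off-diagonal entry contributes $2|q_{i,j}|\,|\alpha_{i,j}|\cos(\sphericalangle q_{i,j}-\sphericalangle\alpha_{i,j})$ to the objective, phase alignment plus $\alpha_{i,j}\neq 0$ forces the magnitude to the boundary. Your explicit remark that the decoupling across pairs and the deferral of the full $Q\succeq 0$ check are the points needing care is a slightly cleaner presentation of the same argument.
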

\begin{proof}
The $2 \times 2$ principal minor of $Q$ which contains  $i$ and $j$ columns and rows of $Q$  must be positive semi-definite, i.e.,
\begin{equation*}
M_{i,j}(Q) =
\begin{bmatrix}
p_i & & q_{i,j}^*\\
%\\
q_{i,j} & & p_j
\end{bmatrix} \succeq 0.
\end{equation*}
This constraint is satisfied if determinant of $M_{(i,j)}$ is non-negative, i.e., $| q_{i,j} | ^2 \leq p_i p_j $.
Moreover, the objective function in \eqref{pr4-2} is the inner product of $Q$ and $H^H H$, i.e.,
$\mbox{tr} ( H Q H^H )= \sum_{\forall i,j} q_{i,j} \alpha_{i,j}$ and only contains $q_{i,j}$ in two of its terms as $q_{i,j}^*\alpha_{i,j}^* + q_{i,j} \alpha_{i,j}=2|q_{i,j}|\ |\alpha_{i,j}| \cos(\sphericalangle q_{i,j}-\sphericalangle \alpha_{i,j})$. 
%Note that at optimum, $\sphericalangle q_{i,j}$ should satisfy $\cos(\sphericalangle q_{i,j}-\sphericalangle \alpha_{i,j}) > 0$, for otherwise we can always increase the objective function by another $\sphericalangle q_{i,j}$ that leads to positive value for $\cos(\sphericalangle q_{i,j}-\sphericalangle \alpha_{i,j})$. This contradicts the optimality assumption of $\sphericalangle \alpha_{i,j}$.
These two terms are an increasing function of $| q_{i,j} |$ where $(\sphericalangle q_{i,j}-\sphericalangle \alpha_{i,j})$ is an acute angle. Therefore, under the constraint $| q_{i,j} | \leq \sqrt{p_i p_j}$,  these two terms are maximum for $| q_{i,j} | = \sqrt{p_i p_j}$ since for some  $\sphericalangle q_{i,j}$  the angle  $(\sphericalangle q_{i,j}-\sphericalangle \alpha_{i,j})$ is acute.
%In this case, $q_{i,j}^*\alpha_{i,j}^* + q_{i,j} \alpha_{i,j}$ is an increasing function of $| q_{i,j} |$. Therefore, under the constraint $| q_{i,j} | \leq \sqrt{p_i p_j}$,  these two terms are maximum for $| q_{i,j} | = \sqrt{p_i p_j}$ since for optimal phase of  $q_{i,j}$, $ \cos(\sphericalangle q_{i,j}-\sphericalangle \alpha_{i,j})$  is positive.
%   It is obvious that for a rank one solution  $Q$ defined by  \begin{equation}
%   \label{eqn: 8}
%   Q=\mathbf{w}_1\mathbf{w}_1^H, \quad \mathrm{where }~ \mathbf{w}_1=[\sqrt{p_1},\sqrt{p_2}e^{j\theta_2},\cdots,\sqrt{p_1}e^{j\theta_N}]^T
%   \end{equation}
%   the constraints $| q_{i,j} | = \sqrt{p_i p_j}$ are satisfied. Moreover the above two terms is equal to $2\sqrt{p_i p_j}\ |\alpha_{i,j}| \cos(\theta_i-\theta_j-\sphericalangle \alpha_{i,j})$
\end{proof}

%Note that in this case we consider semi-definiteness of only $2 \times 2$ principal minors which is a necessary but not a sufficient condition. However, in the rest of this paper we consider non-negativeness of smallest eigenvalue of $Q$ which is necessary and sufficient condition.
Now we will show that the optimal solution for the relaxed problem  using $| q_{i,j} |  = \sqrt{p_i p_j} $ results in positive semi-definite covariance matrix which implies that the resulting solution is an optimal solution for the original problem.
From Lemma~\ref{lemma: 1}, we know the magnitudes $|q_{i,j}|=\sqrt{p_i p_j}$ and only need to find the phase of $q_{i,j}$. A necessary and sufficient condition for $Q\succeq 0$ is that the smallest eigenvalue of $Q$ be non-negative. To use this property, we  associate a matrix dual variable $B$ to the constraint $Q \succeq 0$, and form the Lagrangian as follows (see \cite{dual} for more details)
\begin{equation}
\mathcal{L} (Q , B , C) = - \mbox{tr} (H Q H^H) + tr( C (Q - P)) - \mbox{tr} (B Q),
\label{lag}
\end{equation}
where $P=\mathrm{diag}( p_1, p_2, \cdots, p_{N} )$ denotes the power budgets and $C=\mathrm{diag}( c_1, c_2, \cdots, c_{N})$ is a diagonal Lagrangian multipliers matrix for the PACs.
% Note that, since at optimum point, each PACs in (P1) must be met with equality, the associated dual variables ($c_i$) are strictly positive. Thus, at optimum point, $C$ is strictly positive definite and full-rank.
 By setting the first order derivative of  (\ref{lag}) with respect to $Q$ equal to zero, we obtain
$B = C - H^H H.$
Thus, the off-diagonal entries of $B$ are $b_{i,j}= - \alpha_{i,j}, i \neq j$ where $b_{i,j} = [B]_{i,j}$ for all $i,j \in \{1,\cdots, N\}$ and we should only find the diagonal entries $b_{i,i} = c_{i} - \alpha_{i,i}$. Note that since $B$ is positive semi-definite, $b_{i,i}$s are real and non-negative, thus $c_{i} \geq \alpha_{i,i}$ which yields $C$ be an strictly positive matrix. The complementary slackness KKT conditions yields
\begin{equation}
BQ = QB = 0, \label{BQ}
\end{equation}
which means that $B$ is a hermitian positive semi-definite matrix in the null space of $Q$. By multiplying each row of $B$ in each column of $Q$ and equating it to zero, we can obtain $N \times N$ equations with complex coefficients and variables. The diagonal enties of $Q$ are  $q_{i,i} = p_i\geq 0$, however, the off-diagonal entries $q_{i,j}, i\neq j$ are unknown and complex in general. In contrast, the off-diagonal values of $B$ are known and complex, while its diagonal entries are unknown. Furthermore, since $B$ is hermitian, $b_{i,i}$ is real, and the number of unknown variables and known values are equal. In the following lemma an important property of the solution of the relaxed problem is presented which facilitates approaching the final solution.

%\subsection{General case}

%Next, we show one important property for $Q$ in the following lemma.
\begin{lemma}\label{lemma: 2}
The rank of the solution of for the relaxed problem is one and  can be decomposed as
\begin{equation}
\label{eqn: 8}
Q=\mathbf{w}_1 \mathbf{w}_1^H,
\end{equation}  where $\mathbf{w}_1=[\sqrt{p_1},\sqrt{p_2}e^{j\theta_2},\cdots,\sqrt{p_1}e^{j\theta_N}]^T$ is its dominant eigenvector and $\gamma_1 = \sum_{i=1}^{N} p_i $ is its non-zero eigenvalue.
\end{lemma}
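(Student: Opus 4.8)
The plan is to combine the two facts already in hand — that $q_{i,i}=p_i$ (from the first lemma) and that $|q_{i,j}|=\sqrt{p_ip_j}$ for every pair (Lemma~\ref{lemma: 1}) — with the positive semi-definiteness of $Q$ to force the rank down to one, and then read off the eigenvector and eigenvalue from the structure of $Q$. The point of departure is that, with these values, every $2\times2$ principal minor has determinant $p_ip_j-|q_{i,j}|^2=0$, i.e.\ each $2\times2$ block of $Q$ is rank-deficient; the whole content of the lemma is that these pairwise degeneracies propagate to a global rank-one structure.

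First I would exploit $Q\succeq0$ by writing it as a Gram matrix, $Q=V^HV$ with $V\in\mathds{C}^{r\times N}$ and $r=\mathrm{rank}(Q)$, so that its columns $\mathbf{v}_i$ satisfy $q_{i,j}=\mathbf{v}_i^H\mathbf{v}_j$ and $\|\mathbf{v}_i\|^2=q_{i,i}=p_i$. The condition $|q_{i,j}|=\sqrt{p_ip_j}$ then reads $|\mathbf{v}_i^H\mathbf{v}_j|=\|\mathbf{v}_i\|\,\|\mathbf{v}_j\|$, which is precisely the equality case of the Cauchy--Schwarz inequality and hence forces $\mathbf{v}_i$ and $\mathbf{v}_j$ to be collinear. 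Since this holds for every pair and every $p_i>0$, all the $\mathbf{v}_i$ lie on a single line through the origin, so $\mathrm{rank}(V)=1$ and therefore $\mathrm{rank}(Q)=1$. Writing $Q=\mathbf{w}_1\mathbf{w}_1^H$ and matching the diagonal gives $|[\mathbf{w}_1]_i|^2=p_i$, i.e.\ $[\mathbf{w}_1]_i=\sqrt{p_i}\,e^{j\theta_i}$ after fixing the irrelevant global phase by $\theta_1=0$, while the single nonzero eigenvalue equals $\mbox{tr}(Q)=\|\mathbf{w}_1\|^2=\sum_{i=1}^N p_i$, as claimed.

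It remains to pin down the phases $\theta_2,\dots,\theta_N$ and, more importantly, to justify that this rank-one matrix is genuinely the optimizer, i.e.\ that the relaxation is tight. Here I would invoke the KKT system already derived: $B=C-H^HH\succeq0$ together with $BQ=QB=0$. Since $Q=\mathbf{w}_1\mathbf{w}_1^H$, the condition $BQ=0$ collapses to the single vector equation $B\mathbf{w}_1=0$; writing out its $i$th row and using $b_{i,j}=-\alpha_{i,j}$ turns it into $b_{i,i}\sqrt{p_i}\,e^{j\theta_i}=\sum_{j\neq i}\alpha_{i,j}\sqrt{p_j}\,e^{j\theta_j}$, which, because $b_{i,i}$ must be real and non-negative, amounts to requiring $\sum_{j\neq i}\alpha_{i,j}\sqrt{p_j}\,e^{j(\theta_j-\theta_i)}\ge 0$ for each $i$ — a coupled system that simultaneously determines $\{\theta_i\}$ and certifies $B\succeq0$.

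The step I expect to be the main obstacle is exactly this last one: verifying that a consistent phase assignment exists for which $B\mathbf{w}_1=0$ with $B\succeq0$, equivalently that the relaxed (only $1\times1$ and $2\times2$ minors) optimum is attained by a positive semi-definite matrix. The Cauchy--Schwarz collapse to rank one is clean and automatic once $Q\succeq0$ is assumed, but the existence of phases $\theta_i$ meeting the coupled KKT conditions — the ``integrability'' of the angles $\sphericalangle\alpha_{i,j}$ into differences $\theta_i-\theta_j$ — is where the real work lies. It is transparent in the special cases $N=2$ (a single off-diagonal phase, trivially consistent) and $K=1$ (where $H^HH$ is itself rank one, so $\sphericalangle\alpha_{i,j}=\theta_j-\theta_i$ automatically), which are precisely the two cases for which the paper later provides closed forms.
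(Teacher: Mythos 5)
Your proposal is correct in its core step but takes a genuinely different route to the rank-one conclusion than the paper does. The paper never invokes Cauchy--Schwarz: it extracts the phase-consistency relation $\sphericalangle q_{i,j}=\sphericalangle q_{1,j}-\sphericalangle q_{1,i}$ directly from the complementary-slackness system $BQ=0$ (by comparing the equations coming from different entries of $BQ$, as it later makes explicit for $N=3$), then writes out $Q$ entrywise and observes that it factors as $\mathbf{w}_1\mathbf{w}_1^H$ with $\gamma_1=\|\mathbf{w}_1\|^2=\sum_i p_i$. Your Gram-matrix argument --- $Q=V^HV$, $|q_{i,j}|=\|\mathbf{v}_i\|\,\|\mathbf{v}_j\|$ forcing collinearity of all columns --- gets rank one more cleanly and with no entrywise bookkeeping, but be aware that it leans on the very constraint the ``relaxed problem'' discards: the relaxation keeps only the $1\times1$ and $2\times2$ minors, so its optimizer is a priori just $q_{i,j}=\sqrt{p_ip_j}\,e^{j\sphericalangle\alpha_{i,j}}$ and need not admit a Gram decomposition. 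The paper's proof has the mirror image of this defect (it uses the KKT multiplier $B$ of the constraint $Q\succeq0$, which the relaxation has dropped), so neither argument literally establishes the lemma ``for the relaxed problem''; both really show that a KKT point of (P1) whose off-diagonal magnitudes saturate $\sqrt{p_ip_j}$ must be rank one. Your closing paragraph correctly isolates the genuine remaining gap --- the existence of phases $\theta_i$ with $B\mathbf{w}_1=0$ and $B\succeq0$, i.e.\ tightness of the relaxation --- which the paper asserts rather than proves, and you rightly note it is transparent only in the $N=2$ and $K=1$ cases. What the paper's longer route buys is that the relation among the $\sphericalangle q_{i,j}$ and the expression for $b_{i,i}$ fall out of the same computation and are needed anyway to set up \eqref{ppppp1}; your route would have to derive \eqref{ppp1} separately afterwards, which you in fact do via $B\mathbf{w}_1=0$.
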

\begin{proof} Using Lemma~\ref{lemma: 1} and by multiplying the $i$th row of $B$ in $N-1$ columns of $Q$ and equating them to zero as in \eqref{BQ} for $i=1,2, \cdots, N$, we can obtain
\begin{align}
b_{i,i} &= c_{i} - \alpha_{i,i}= \sum_{j=1 , i \neq j}^{N} \frac{\sqrt{p_j}}{\sqrt{p_i}} | \alpha_{i,j} | e^{ j (\sphericalangle \alpha_{i,j} - \sphericalangle q_{i,j})}
\label{ppp1}
\\
\sphericalangle q_{i,j} &= \sphericalangle q_{1,j} - \sphericalangle q_{1,i} \quad \forall i \neq j.
\label{ppp3}\end{align}
Substituting $| q_{i,j} | = \sqrt{p_i p_j}$ and (\ref{ppp3}), we can write  $Q$ as
\begin{equation*}
\begin{bmatrix}
p_{1}  \hspace*{-5mm}&\hspace*{-5mm}  \sqrt{p_1p_2} e^{ j \sphericalangle q_{1,2} } \hspace*{-5mm}& \hspace*{-5mm}  \ldots &\hspace*{-5mm}  \sqrt{p_1 p_N} e^{ j \sphericalangle q_{1,N} } \\
%\\
\sqrt{p_1p_2}e^{- j \sphericalangle q_{1,2} }  \hspace*{-5mm}&\hspace*{-5mm}  p_{2}  \hspace*{-5mm}&\hspace*{-17mm}  \ldots
 \hspace*{-2mm}&\hspace*{-8mm}  \sqrt{p_2 p_N}e^{ j (\sphericalangle q_{1,N} - \sphericalangle q_{1,2})} \\
%\\
\vdots  \hspace*{-5mm}&\hspace*{-5mm}  \vdots  \hspace*{-5mm}&\hspace*{-5mm}  \ddots
 \hspace*{-5mm}&\hspace*{-5mm}  \vdots \\
%\\
\sqrt{p_1 p_N} e^{- j \sphericalangle q_{1,N} }  \hspace*{-2mm}&\hspace*{-2mm}  \sqrt{p_2 p_N}e^{ j (\sphericalangle q_{1,2} - \sphericalangle q_{1,N})}  \hspace*{-5mm}&\hspace*{-1mm}  \ldots
 \hspace*{-7mm}&\hspace*{-5mm}  p_{N}
\end{bmatrix}.
\end{equation*}
The above matrix  is equal to $Q= \mathbf{w}_1 \mathbf{w}_1^H$
where $\mathbf{w}_1 =
[\sqrt{p_1}e^{j\sphericalangle q_{1,1}},\sqrt{p_2}e^{j\sphericalangle q_{2,1}},\cdots,\sqrt{p_1}e^{j\sphericalangle q_{1,N}}]^T$. Thus, the optimal input covariance matrix is rank-one. The proof is complete using the eigenvalue decomposition theorem. The non-zero eigenvalue of $Q$
as $\gamma_1 =\|\mathbf{w}_1\|^2= \sum_{i=1}^{N} p_i > 0$ and so optimal $Q$ is positive semi-definite.
\end{proof}
Since the optimal solution in \eqref{eqn: 8} for the relaxed problem is positive semi-definite, it is also an optimal solution for (P1).
Lemma~\ref{lemma: 2} implies that the optimal EB invests all its transmit power only in one beamformer, where $\sqrt{p_i} e^{j \theta_i}s_1$ is the transmitted signal from $i^{th}$ antenna, thus we only need to optimize the unknown phases $\theta_i$ for $i=2,\cdots,N$. Note that $\theta_1$ can be considered as absorbed or accounted in $s_1$.

Interestingly, for the special case of $p_1=p_2= \cdots =p_{N}= \frac{p_t}{N}$, the optimal solution is an equal energy transmission scheme. Since $b_{i,i}$s are real,  the imaginary part of $b_{i,i}$s are zero which yields  a set of equations for optimal $\theta_i$ as in \eqref{ppppp1}.
\begin{figure*}\begin{eqnarray}\label{ppppp1}\begin{array}{c}
{\sqrt{p_2}| \alpha_{1,2}|}  \sin(\theta_2 - \theta_1 + \sphericalangle \alpha_{1,2}) + \cdots +  {\sqrt{p_N}| \alpha_{1,N}| } \sin(\theta_{N} - \theta_1 + \sphericalangle \alpha_{1,N}) = 0,

\\
 {\sqrt{p_1}} | \alpha_{1,2}| \sin(\theta_1 - \theta_2 - \sphericalangle \alpha_{1,2}) + \cdots +  {\sqrt{p_N}} | \alpha_{2,N}| \sin(\theta_2 - \theta_{N} - \sphericalangle \alpha_{2,N}) = 0,
\\
\vdots\label{ppppN}
\\
 {\sqrt{p_1}| \alpha_{1,N}|}  \sin(\theta_1 - \theta_N - \sphericalangle \alpha_{1,N}) + \cdots +  {\sqrt{p_{N-1}}| \alpha_{N-1,N}|}  \sin(\theta_{N-1} - \theta_{N} - \sphericalangle \alpha_{N-1,N}) =0.\end{array}
\end{eqnarray}\end{figure*}
We can alternatively obtain \eqref{ppppp1} by substituting \eqref{eqn: 8} in (P1) and set the derivatives with respect to $\theta_i$ to zero.
The $N$ equations in \eqref{ppppp1} involve $N$ unknown variables  $\theta_1, \theta_2, \cdots, \theta_{N}$. However, since only $N-1$ of them are linearly independent, without loss of optimally we set $\theta_{1} =0$ and so the number of independent equations become equal to the number of unknowns.
It is not easy to derive closed-form exact expressions for $\theta_i$ for arbitrary $N$. However, optimal $\theta_2, \theta_3, \cdots, \theta_{N}$ can be easily calculated by using standard numerical methods such as Newton. Moreover, we find  closed-form  solution for some special cases.

\subsubsection{Case of two transmit antennas}
In this case according to \eqref{ppppp1} from $ \frac{\sqrt{p_2}}{\sqrt{p_1}}| \alpha_{1,2}| \sin(\theta_1 - \theta_2 - \sphericalangle \alpha_{1,2})=0$, we obtain $\theta_1-\theta_2=\sphericalangle \alpha_{1,2}$.

\subsubsection{Case of one ER and multiple transmit antennas} For $K = 1$, the rank of $H^HH$  is one. Thus $\sphericalangle\alpha_{i,j} = \sphericalangle h_j - \sphericalangle h_i$. Therfore, we can choose the trivial optimal value for $\theta_i $ as $ -\sphericalangle h_i$. This is because letting $\theta_i = -\sphericalangle h_i$ yields $\sin(0) =0$ that satisfies all $N-1$ equations in \eqref{ppppp1}. This solution is identical to the one presented in \cite{miso} . The author in \cite{miso} develops a closed-form solution for capacity of MISO channel with PAC. In fact, the problem (P1) is a generalization of problem
(6) in \cite{miso} which only considers the case of $K =1$.

\subsubsection{Case of 3 transmit antennas $N=3$ and arbitrary $K$}
For $N = 3$ antennas, we can write
\begin{equation}
Q= \!\begin{bmatrix}
p_{1}  \hspace*{-1mm}&\hspace*{-1mm}  \sqrt{p_1p_2} e^{ j \theta_2 }
 \hspace*{-1mm}&\hspace*{-1mm}  \sqrt{p_1p_3} e^{ j \theta_3 } \\
%\\
\sqrt{p_1p_2} e^{- j \theta_2 }  \hspace*{-1mm}&\hspace*{-1mm}  p_{2}
 \hspace*{-1mm}&\hspace*{-1mm}  \sqrt{p_2p_3} e^{ j \sphericalangle q_{2,3} } \\
%\\

\sqrt{p_1p_3} e^{- j \theta_3 }  \hspace*{-1mm}&\hspace*{-1mm}  \sqrt{p_2p_3} e^{- j \sphericalangle q_{2,3} }
 \hspace*{-1mm}&\hspace*{-1mm}  p_{3 }
\end{bmatrix}
\label{N3}
\end{equation}
 as a function of unknowns $ \theta_2, \sphericalangle q_{2,3}$ and $ \theta_3$.
Now, using $BQ=0$ in  \eqref{BQ} where
$B=
\begin{bmatrix}
b_{1,1} \hspace*{-1mm}&\hspace*{-1mm}  - \alpha_{1,2}
 \hspace*{-1mm}&\hspace*{-1mm}  - \alpha_{1 ,3} \\
%\\
- \alpha_{1,2}^*  \hspace*{-1mm}&\hspace*{-1mm}  b _{2,2}
 \hspace*{-1mm}&\hspace*{-1mm}  - \alpha_{2 ,3} \\
%\\
-\alpha_{1 ,3}^*  \hspace*{-1mm}&\hspace*{-1mm}  - \alpha_{2 ,3}^*
 \hspace*{-1mm}&\hspace*{-1mm}  b_{3 ,3}
\end{bmatrix},
$
for $[BQ]_{1,1}=0$ and $[BQ]_{1,2}=0$, we can  respectively write
\begin{eqnarray}
b_{1,1} p_1 - \alpha_{1,2} \sqrt{p_1,p_2} e^{- j \theta_2 } - \alpha_{1,3} \sqrt{p_1,p_3} e^{- j \theta_3 } =0,
\\
b_{1,1} \sqrt{p_1,p_2} e^{- j \theta_2 } - \alpha_{1,2} p_2 - \alpha_{1,3} \sqrt{p_2,p_3} e^{- j \sphericalangle q_{2,3} } =0.
\end{eqnarray}
Note that other entries of $BQ=0$  can not yield independent equations.
The above equations yield
\begin{align}
b_{1,1} &= \alpha_{1,2} \sqrt{\frac{p_2}{p_1}} e^{- j \theta_2 } + \alpha_{1,3} \sqrt{\frac{p_3}{p_1}} e^{- j \theta_3 },
\label{b11}
\\
b_{1,1} & = \alpha_{1,2} \sqrt{\frac{p_2}{p_1}} e^{- j \theta_2 } + \alpha_{1,3} \sqrt{\frac{p_3}{p_1}} e^{- j \sphericalangle q_{2,3} - \theta_2}.
\label{b12}
\end{align}
Comparing (\ref{b11}) and (\ref{b12}), we  conclude that $\sphericalangle q_{2,3} = \theta_3 - \theta_2$. Thus, it remains two unknown primal variables i.e., $\theta_2, \theta_3$ and three dual variables i.e., $ b_{1,1}, b_{2,2}, b_{3,3}$. Using $\sphericalangle q_{2,3} = \theta_3 - \theta_2$, similar to $b_{1,1}$, we  can obtain $b_{2,2}$ and $b_{3,3}$ by considering other entries of $BQ=0$ as
\begin{eqnarray}
 &\hspace*{-5mm} b_{1,1} = \frac{| \alpha_{1,2} | \sqrt{ {p_2}} e^{ j (\sphericalangle \alpha_{1,2} - \theta_2) } + | \alpha_{1,3} | \sqrt{ {p_3}} e^{ j (\sphericalangle \alpha_{1,3} - \theta_3)}}{\sqrt{p_1}},
\label{b112}\\
 &\hspace*{-5mm} b_{2,2} = \frac{| \alpha_{1,2} | \sqrt{p_1}  e^{ j (\theta_2 - \sphericalangle \alpha_{1,2}) } + | \alpha_{2,3} | \sqrt{ p_3 } e^{ j (\sphericalangle \alpha_{2,3} + \theta_2 - \theta_3)}}{\sqrt{ p_2 }},
\label{b22}
\\
 &\hspace*{-5mm} b_{3,3} = \frac{| \alpha_{1,3} | \sqrt{ {p_1} } e^{ j (\theta_3 - \sphericalangle \alpha_{1,3}) } + | \alpha_{2,3} | \sqrt{ {p_2}} e^{ j (- \sphericalangle \alpha_{2,3} + \theta_3 - \theta_2)}}{\sqrt {p_3}}.
\label{b33}
\end{eqnarray}
The dual variables in \eqref{b112}, \eqref{b22} and \eqref{b33} are expressed in terms of  $\theta_2$ and $\theta_3$. Thus the problem (P1) is converted to finding two primal variables for $N=3$. Since $b_{i,i}$ is real, the imaginary parts of \eqref{b112}-\eqref{b22} are zero, i.e.,
\begin{align}
\!\!\!\!\! {\sqrt{p_2}}| \alpha_{1,2}| \sin(\theta_2 -\sphericalangle\alpha_{1,2}) &= {\sqrt{p_3}} | \alpha_{1,3}| \sin(\sphericalangle \alpha_{1,3} - \theta_3),
\label{pppp1}
\\
\!\!\!\!\! {\sqrt{p_1}}| \alpha_{1,2}| \sin(\sphericalangle \alpha_{1,2}-\theta_2) &= {\sqrt{p_3}}| \alpha_{2,3}| \sin(\sphericalangle \alpha_{2,3} + \theta_2 - \theta_3).\!\!\!\!\!
\label{pppp2}
\end{align}
%Since satisfying \eqref{pppp1} and \eqref{pppp2} yields $\frac{\sqrt{p_2}}{\sqrt{p_1}}| \alpha_{1,2}| \sin( \sphericalangle \alpha_{1,2} - \theta_2) = - \frac{\sqrt{p_3}}{\sqrt{p_1}}| \alpha_{1,3}| \sin(\sphericalangle \alpha_{1,3} - \theta_3)$ and $\frac{\sqrt{p_1}}{\sqrt{p_2}}| \alpha_{1,2}| \sin(\theta_2 - \sphericalangle \alpha_{1,2}) = - \frac{\sqrt{p_3}}{\sqrt{p_2}}| \alpha_{2,3}| \sin(\sphericalangle \alpha_{2,3} + \theta_2 - \theta_3) $ respectively, $ \frac{\sqrt{p_1}}{\sqrt{p_3}}| \alpha_{1,3}| \sin(\theta_3 - \sphericalangle \alpha_{1,3}) = - \frac{\sqrt{p_2}}{\sqrt{p_3}}| \alpha_{2,3}| \sin(- \sphericalangle \alpha_{2,3} + \theta_3 - \theta_2)$ holds, which is identical to eq. \eqref{pppp3}.
The imaginary part of \eqref{b33} leads to another equation which is not  linearly independent with \eqref{pppp1} and \eqref{pppp2}. The non-linear equations (\ref{pppp1}) and (\ref{pppp2}) can be solved using numerical methods to find the optimal values of $\theta_2$ and $\theta_3$.

\section{ Suboptimal Solutions for General Case}\label{sec:sub}
According to \eqref{BQ}, $Q$ is in the null space of the hermitian positive semi-definite matrix $B$ with off-diagonal entries, $b_{i,j} = - \alpha_{i,j}, i \neq j$. However, the diagonal values, $b_{i,i}, i=1, 2, \cdots, N$, are still unknown. To find a suboptimal solution and motivated by \eqref{ppp1}, we  approximate $b_{i,i}$ by   $b_{i,i}\simeq \sum_{j=1 , i \neq j}^{N} \frac{\sqrt{p_j}}{\sqrt{p_i}} | \alpha_{i,j} |$; the cost function is maximized if we could set $ e^{ j (\sphericalangle \alpha_{i,j} - \sphericalangle q_{i,j})}=1$. Since $B$ is approximated only by increasing its diagonal entries, it is obvious that the resulting approximated matrix $B$ becomes full rank and remains positive semi-definite; thus its null space becomes empty.
%In Appendix~\ref{sec:appendixb} for the special case of $N = 3$, we have proven that the approximated $B$ has a full rank.
However, we expect that the eigenvector corresponding to the smallest eigenvalue of the approximated $B$ gives an accurate approximation for the basis of the null-space of $B$. Therefore using Lemma~\ref{lemma: 2},  since the rank of $Q$ is one, we  use the eigenvector associated to the smallest eigenvalue of the approximated $B$ denoted by $[w_{1,1},w_{1,2},\cdots,w_{1,N}]^T$  and define a closed-form suboptimal solution for (P1) as
\begin{equation}
\mathbf{w}_{\textrm{sub}1} =[
\sqrt{p_1 }\frac{\omega_{1,1}}{| \omega_{1,1} |}
,
\sqrt{p_2}\frac{\omega_{1,2}}{| \omega_{1,2} |}
,
\cdots
,
\sqrt{p_N}\frac{\omega_{1,N}}{| \omega_{1,N} |}]^T.
\label{sub1}
\end{equation}

An alternative suboptimal solution can be proposed by approximating $B = C - H^H H$ with another positive semi-definite matrix. Indeed, since $C$ is diagonal, $c_i$s can be approximated with dominant eigenvalue of $H^HH$ i.e., $c_i=\xi_1, i=1, ..., N$. In this case, the resulting $B = C - H^H H$ is a positive semi-definite matrix with at least one zero eigenvalue which its corresponding eigenvector is equivalent to the eigenvector corresponding to the dominant eigenvalue of $H^HH$ due to the diagonal structure of $C$. Thus, Lemma~\ref{lemma: 2}  yields another heuristic suboptimal solution
\begin{equation}
\mathbf{w}_{\textrm{sub}2} =
[
\sqrt{p_1 }\frac{v_{N,1}}{| v_{N,1} |}
,
\sqrt{p_2}\frac{v_{N,2}}{| v_{N,2} |}
,\cdots
\\
\sqrt{p_N}\frac{v_{N,N}}{| v_{N,N} |}
]^T
\label{sub2}
\end{equation}
where $[v_{N,1},v_{N,2},\cdots,v_{N,N}]^T$ is the eigenvector corresponding to the largest eigenvalue of $H^HH$. Compared with the optimal solution under SPC with the beam weight vector as $\sqrt{p_t} \mathbf{v_{\mathbf{N}}}$, we can see that the only difference between optimal beam weight of $i^{th}$ antenna with SPC i.e., $w_{1,i} =\sqrt{p_t}v_{N,i}$ and proposed sub-optimal solution in (\ref{sub2}) i.e., $w_{1,i} =\sqrt{p_i} \frac{v_{N,i}}{| v_{N,i} |}$ is in  the power allocations.

Lemma~\ref{lemma: 2}, also reveals that the optimal solution requires a single mode beamforming which allows a common data to be broadcasted to all receivers. Note that, it has been established that such a broadcasting, maximizes the sum rate under PAC.

\section{Simulations}\label{sec:sim}
In this section, we provide simulation results to validate our analytical results in previous sections. We consider a circular cell with $\mathcal{R}=15m$ overlaid by $K=10$ uniformly distributed ERs. The channel from the ET to
the $k^{th}$ ER is modelled as $\mathbf{h_k }=\sqrt{0.01 d_k^{-v}} \mathbf{\tilde{h}_k}$, where $d_k \in [1,\mathcal{R}]$ is the distance between ET and $k^{th}$ ER, $v=3$ is the path-loss exponent, and $ \mathbf{\tilde{h}_k}$ is the vector of small-scale Rayleigh fading coefficients with complex gaussian distribution with zero mean and unit variance as in \cite{deb}. We evaluate the average performance over 1000 ER locations and channel realizations. The total transmit power at the ET is set to $p_t=1 watt$ and we also set the transmit power limitation of each  antenna as $p_i = \frac{p_t}{N}$.

%\subsection{Sum-energy maximization}
Here, we assume the Sum-energy maximization problem under different power constraints for $K=10$ ERs. Figure \ref{figg1} illustrates the results of the sum of harvested power versus the number of transmitted antennas. In the PAC (Beamforming) case, we applied CVX to solve the problem (P1), but in the PAC (independent) case, each transmit antenna has its own power budget and acts independently and the transmit strategy is isotropic. This constraint is equivalent to having a diagonal input covariance; i. e., $Q = \textrm{diag} \lbrace p_1, p_2, \cdots, p_{N}\rbrace$.
Comparing to the independent case, it can be observed that employing beamforming to create correlation among the transmit signals, significantly increases the sum of harvested power in both cases of sum power and PAC. For example, for $N=5$, the sum power for independent case is $1.22$mw, while it is $4.59$mw and $ 5.39$mw for PAC and SPC cases, respectively. Note that single-mode beamforming introduces complete correlation among the signals from different antennas since all antennas send the same symbol, with different weights \cite{miso}. However, it is obvious that under the SPC without PAC, power allocation can further increases the sum of harvested power, which can be seen in figure \ref{figg1}.
\begin{figure}
\centering
\includegraphics[width=.9\columnwidth]{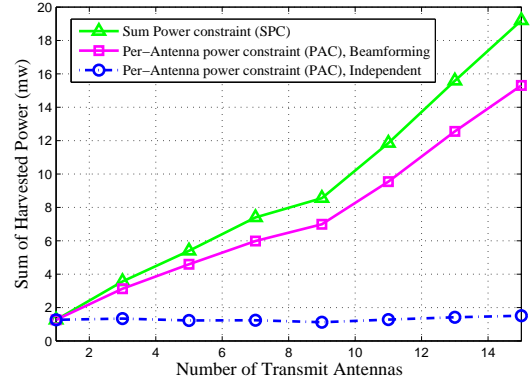}
\caption{ Sum of harvested power versus the number of transmit antennas for $K = 10$ under different power constraints}
\centering
\label{figg1}
\centering
\end{figure}

Fig. \ref{figg2} illustrates the results of problem (P1) which are obtained by 1) Optimal solution using CVX; 2) Proposed sub-optimal 1 solution in (\ref{sub1}), 3) Proposed sub-optimal 2 solution in (\ref{sub2}) and 4) Without beamforming. We see that our proposed solutions are matched to the optimum results from CVX and the difference is so small and negligible. These results show that although only the phase of $Q$ in the proposed sub-optimal 2 equals to the phase of optimal $Q$ in SPC problem, the numerical sub-optimum results are matched with high precision to the one that has been achieved by CVX. This indicates that the power constraint has a minor role in determining the phase of the optimal solution.
Magnifying the plots in Fig. (\ref{figg2}) shows that the performance of the proposed sub-optimal 1 solution is better than sub-optimal 2 method, which was predictable. Since, in the sub-optimal 1 method, we have used an estimated version of (\ref{ppp1}), while it has been ignored in sub-optimal 2 solution. In order to show the performance of beamforming strategy obtained by our sub-optimal solutions, the scheme without beamforming strategy is also considered. In this strategy, each transmit antenna sends the signal with random phase. Fig. (\ref{figg2}) shows the significant performance gains by beamforming with proposed sub-optimal methods in comparison to the scheme without beamforming. For example, for $N=5$, the sum power for the scheme without beamforming is $1.2834 $mw while it is $4.4976$mw and $4.4936$mw for proposed sub-optimal 1 and proposed sub-optimal 2 solutions, respectively.

The computational complexity of the methods significantly affects their potential applicability to the real-time or even off-line systems \cite{omid}. Thus, a comparison between the computational time for sub-optimal 1 solution and CVX numerical solution is performed and demonstrated in table \ref{table2} as another consideration about our proposed sub-optimal solutions.  Table \ref{table2} shows that the computational time for numerical solution is much larger than that of sub-optimal solutions. For example, for $N=15$, the computational time of CVX numerical solution is around 273 msec while it is 0.46 msec and 0.26 msec for proposed sub-optimal 1 and proposed sub-optimal 2 respectively on a desktop computer with 2.5 Ghz CPU and 4GB RAM. 

\begin{figure}
\centering
\includegraphics[width=.9\columnwidth]{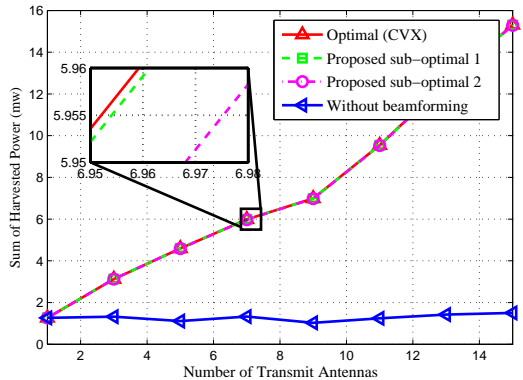}
\caption{ Sum of harvested power versus the number of transmit antennas for $K = 10$. }
\centering
\label{figg2}
\centering
\end{figure}

\begin{table}
\centering
\caption{Computational Time (milli sec) }\label{table2}
\begin{tabular}{c|c|c|c|c|c|c}
$N$& $5$&$10$&$15$&$20$&$25$
\\
\hline
Optimal numerical solution (CVX) &  244 & 246 &  273& 331 & 372
\\
\hline
Proposed sub-optimal 1 &0.30 &  0.35 &  0.46 & 0.76  &  1.02
\\
\hline
Proposed sub-optimal 2 &0.13 &  0.18 &  0.26 & 0.39  &  0.52
\end{tabular}
\end{table}

\section{Discussion and Conclusion }\label{sec:con}
In this paper, we have studied the EB problem with PAC in a WPT system; where a multi antenna ET transfers wireless energy to ERs which are randomly placed within the cell area. In the case of sum energy maximization, we have shown that the optimal transmit covariance matrix is rank-one and only the phases of the beamforming vector weights depend on the channel coefficients; however, their amplitudes are independent of the channel and depend only on the PACs.

Problem (P1) aims to find a semi-definite matrix $Q$ containing $N \times N$ complex entries. We reduced this problem to $N-1$  equations in  \eqref{ppppN} which can be solved numerically for  $N-1$ real unknowns $\{\theta_i\}_{i=2}^N$. In addition, we have proposed two accurate sub-optimal solutions in \eqref{sub1} and \eqref{sub2}. Simulation results show that these sub-optimal solutions are matched closely to optimal values  obtaining by optimization tools (See Fig. \ref{figg2}). Either \eqref{sub1} or \eqref{sub2} can be used for initialization of numerical algorithms to solve \eqref{ppppN}. To show the efficiency of our proposed methods, we compare the CPU run times of different algorithms which shows the superiority of the proposed solutions in the case of computational costs (See table \ref{table2}).

As mentioned earlier, a practical scenario for (P1) is in ET where the transmitted antennas are at different physical locations. Lemma~\ref{lemma: 2} implies that only one single energy beam is used for collaborative EB at different antennas. Therefore, they only need to store one common pseudo-random energy signal and the network coordinator only needs to send the optimum phase $\theta_i$, to the $i$th antenna for $ i=2,\cdots,N$. In this case a network coordinator is responsible for 1) collecting the information from all the distributed transmitters 2) finding the optimal phases and 2) sending the optimum $\theta_i$ to individual distributed transmitters.

%In the case of max-min fair problem with PAC, we analyzed it for $N=2$ and showed that the number of bottleneck ERs may change according to the channel. Moreover, simulation results show that the performance of the proposed methods are much better than the similar methods in the literature (such as QoCS) in the case of both sum and minimum harvested energy in ERs.

\appendices

\bibliographystyle{IEEETran}
\bibliography{Reference}

\end{document}